\theoremstyle{definition}
\newtheorem{definition}{Definition}
\theoremstyle{plain}
\newtheorem{theorem}{Theorem}
\newtheorem{lemma}{Lemma}
\renewcommand{\vec}[1]{\mathbf{#1}}
\newcommand{\defeq}{\triangleq}
\newcommand{\compare}{c_\vec{q}(j,j')}
\newcommand{\pairacc}{PairwiseAccuracy\xspace} 
\newcommand{\pairreg}{pairwise regularization\xspace}
\newcommand{\pairregshort}{Pairwise Regularization\xspace}
\newcommand{\myparagraph}[1]{\textit{#1}. }
\title{Fairness in Recommendation Ranking \\through Pairwise Comparisons}
\author{Alex Beutel, Jilin Chen, Tulsee Doshi, Hai Qian, Li Wei, Yi Wu, Lukasz Heldt,\\Zhe Zhao, Lichan Hong, Ed H. Chi, Cristos Goodrow
\\[1mm] {\small \{alexbeutel, jilinc, tulsee, hqian, liwei, wuyish, heldt, zhezhao, lichan, edchi, cristos\}@google.com }
\\[1mm] {\small Google} }
\date{}
\begin{document}

\maketitle

\begin{abstract}
	Recommender systems are one of the most pervasive applications of machine learning in industry, with many services using them to match users to products or information.  As such it is important to ask: what are the possible fairness risks, how can we quantify them, and how should we address them?

	In this paper we offer a set of novel metrics for evaluating algorithmic fairness concerns in recommender systems.  In particular we show how measuring fairness based on pairwise comparisons from randomized experiments provides a tractable means to reason about fairness in rankings from recommender systems. Building on this metric, we offer a new regularizer to encourage improving this metric during model training and thus improve fairness in the resulting rankings.  We apply this pairwise regularization to a large-scale, production recommender system and show that we are able to significantly improve the system's pairwise fairness.
\end{abstract}

\section{Introduction}
What should we expect of recommender systems?
Recommenders are pivotal in connecting users to relevant content, items
or information throughout the web, but with both users and content producers,
sellers or information providers relying on these systems, it is important that
we understand who is being supported and who is not.  
In this paper we focus on the risk of a recommender system under-ranking groups of items \cite{ekstrand2018exploring,beutel2017beyond,mehrotra2018towards}.  
For example, 
if a social network under-ranked posts by a given demographic group, that could limit the group's visibility on the service.  

\begin{figure}[t]
    \centering
    \begin{subfigure}[b]{0.45\columnwidth}
        \includegraphics[width=\textwidth]{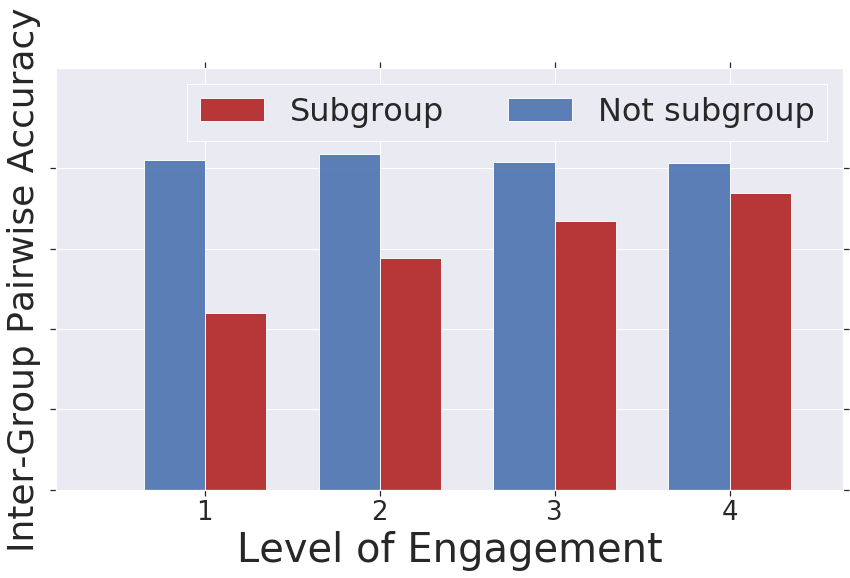}
        \caption{Original}
        \label{fig:prod_fairness_inter}
    \end{subfigure}
    \hspace{2mm}
    \begin{subfigure}[b]{0.45\columnwidth}
        \includegraphics[width=\textwidth]{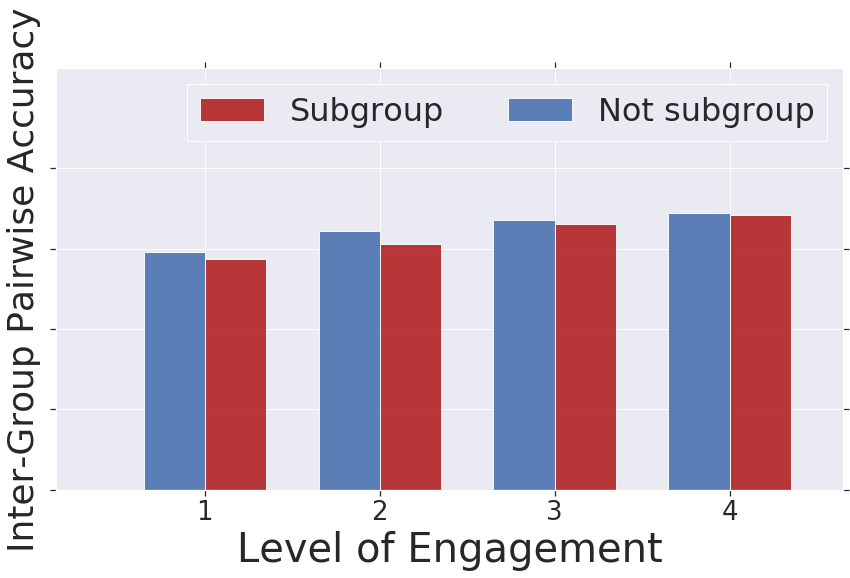}
        \caption{After \pairregshort}
    \end{subfigure}
    \caption{We find significant differences in \emph{inter-group pairwise accuracy}, but using \pairreg we significantly close that gap.}
    \label{fig:fairness_inter}
\end{figure}

While there has been an explosion in fairness metrics for classification
\cite{hardt2016equality,dwork2012fairness,calders2010three,crowson2016assessing} with researchers fleshing out when each metric is appropriate
\cite{kleinberg2016inherent}, there has been far less coalescence of thinking for recommender systems. 
Part of the challenge in studying
fairness in recommender systems is that they are complex.  They often consist of
multiple models \cite{wang2011cascade,he2014practical}, must balance multiple goals
\cite{ma2018modeling,yi2014beyond}, and are difficult to evaluate due to extreme and skewed
sparsity \cite{beutel2017beyond} and numerous dynamics \cite{koren2009collaborative,agarwal2018estimating}.  
All of these issues are hardly resolved in the recommender system community, and present additional challenges in improving recommender fairness.

One challenging split in recommender systems is between treating recommendation as a pointwise prediction problem and applying those predictions for ranked list construction.
Pointwise recommenders make a prediction about user interest for each item and then a ranking of recommendations is determined based on those predictions.  This setup is pervasive in practice \cite{koren2009matrix,mcmahan2013ad,he2014practical,covington2016deep,ma2018modeling}, but significant research goes into bridging the gap between pointwise predictions and ranking construction \cite{adams2011ranking,bello2018seq2slate}.
Fairness falls into a similar dilemma.
Recent research developed fairness metrics centered around pointwise accuracy \cite{beutel2017beyond,yao2017beyond}, but this does not indicate much about the resulting ranking that the user actually sees.
In contract, \cite{zehlike2017fa,DBLP:conf/kdd/SinghJ18,2019arXiv190204056S,biega2018equity} explore what is a fair ranking, but focus on unpersonalized rankings where relevancy is known for all items and in most cases require a post-processing algorithm with item group memberships, which is often not possible in practice \cite{beutel2019putting}.

Further, evaluation of recommender systems is notoriously difficult due to the ever changing dynamics in the system.  What a user was interested in yesterday they may not be interested in tomorrow, and we only know a user's preferences if we recommend them an item.  As a result, metrics are often biased (in the statistical sense) by the
previous recommender system \cite{agarwal2018estimating}, and while a large body of research works to do unbiased offline evaluation \cite{schnabel2016recommendations,schnabel2016unbiased}, this is very difficult due to the large item space, extreme sparsity of feedback, and evolving users and items.  These issues only become more salient when trying to measure recommender system fairness, and even more so when trying to evaluate complete rankings.

We address all of these challenges through a \emph{pairwise recommendation fairness metric}.  Using easy-to-run, randomized experiments we are able to get unbiased estimates of user preferences.  Based on these observed pairwise preferences, we are able to measure the fairness of even a pointwise recommender system, and we show that these metrics directly correspond to ranking performance.  Further, we offer a novel regularization term that we show improves the ultimate ranking fairness of a pointwise recommender, as seen in Figure \ref{fig:fairness_inter}.  We test this on a large-scale recommender system in production and show the practical benefits and trade-offs both theoretically and empirically.
In summary, we make the following contributions:
\begin{itemize}
	\item {\bf Pairwise Fairness:} We propose a set of novel metrics for measuring the fairness of a recommender system based on pairwise comparisons.  We show that this pairwise fairness metric directly corresponds to ranking performance and analyze its relation with pointwise fairness metrics.
	\item {\bf Pairwise Regularization:} We offer a regularization approach to improve the model performance for the given fairness metric that even works with pointwise models.
	\item {\bf Real-world Experiments:} We apply our approach in a large-scale production recommender system and demonstrate that it produces significant improvements in pairwise fairness.
\end{itemize}

\section{Related Work}
This research lies at the intersection of and builds on a myriad of research from the recommender systems community and the machine learning fairness community.

\myparagraph{Recommender Systems}
There is a large research community focused on recommender systems with a wide variety of interests.  Historically, much of the research built on collaborative filtering approaches, with a focus on ratings prediction spurred by the Netflix Prize \cite{koren2009matrix}; another line of work has fleshed out pairwise models of user preferences \cite{joachims2002optimizing,cao2007learning}, particularly for information retrieval in search engines.
As a core component of many industrial application of machine learning, a significant amount of work has been published on production recommenders, such as ad click prediction \cite{mcmahan2013ad,he2014practical} to be used for ranking ads.  These systems often follow cascading patterns with sequences of models being used \cite{wang2011cascade,he2014practical}.
More recently, there has been a strong growth in using state-of-the-art neural network techniques to improve recommender accuracy \cite{he2017neural,covington2016deep,ma2018modeling}.

Building real-world recommenders face a variety of challenges.  Two that relate to the challenges in fairness are the temporal dynamics \cite{koren2009collaborative,wu2017recurrent,hidasi2015session,beutel2018latent} and biased training data \cite{joachims2017unbiased,chen2018top,agarwal2018estimating}.  These issues do not just make training difficult but also evaluation of recommender performance \cite{schnabel2016unbiased}.

\myparagraph{Machine Learning Fairness}
The machine learning fairness community has primarily focused on fairness in classification, with a myriad of definitions being proposed \cite{dwork2012fairness,calders2010three,hardt2016equality,crowson2016assessing}.  Group fairness based definitions, where a model's treatment of two groups of examples is compared, has become the most prevalent structure, but even there researchers have shown tension between different definitions \cite{kleinberg2016inherent,pleiss2017fairness}.  We primarily follow the equality of opportunity intuition from \citet{hardt2016equality}, where we are concerned with differences in accuracy across groups.
Our metric is most closely building on the AUC-based fairness metrics for classification and regression proposed by \citet{dixon2018measuring} and expanded in \cite{manwhitneyfairness} to be framed as different Mann-Whitney U-tests.

\myparagraph{Recommender System Fairness}
There has been a small amount of work on fairness in ranking and recommendation, but with each piece of working taking significantly different perspectives. 
\citet{zehlike2017fa} laid out goals for fair ranking, but do not touch upon recommender systems, where data is far sparser.  Similarly, \citet{DBLP:conf/kdd/SinghJ18}  take a full-ranking view of fairness but are able to apply this to recommender systems through a post-processing algorithm for model predictions; follow-up work \cite{2019arXiv190204056S} moves this into model training.  All of this work \cite{zehlike2017fa,DBLP:conf/kdd/SinghJ18,2019arXiv190204056S,biega2018equity} focuses on an unpersonalized information retrieval setting where relevance labels are known for each item; we focus on personalized recommendations where data sparsity and biases must be handled.  
In contrast, \cite{beutel2017beyond,yao2017beyond} focus on collaborative filtering pointwise accuracy differences across groups but do not connect these metrics to resulting rankings.

More distant is research on statistical parity in recommenders, which argues that in some applications items should be shown at the same rate across groups \cite{zhu2018fairness}.
Diversity \cite{zehlike2017fa,kleinberg2018selection,mehrotra2018towards,stoyanovich2018online}, filter bubbles \cite{bakshy2015exposure}, and feedback loops \cite{feedbackloops}, while related to machine learning fairness, are not the focus of this paper.

\myparagraph{Fairness Optimization}
Many approaches have been proposed to address fairness issues.  Post-processing can provide elegant solutions \cite{hardt2016equality,DBLP:conf/kdd/SinghJ18}, but often requires knowing group memberships for all examples, which is rarely known for demographic data.  Rather, numerous approaches have been developed for optimizing fairness metrics during classifier training, such as constraint-based optimization \cite{agarwal2018reductions,DBLP:conf/nips/GohCGF16}, adversarial learning \cite{zemel2013learning,louizos2015variational,edwards2015censoring,beutel2017data,madras2018learning,DBLP:journals/corr/abs-1801-07593}, and regularization over model predictions \cite{kamishima2011fairness,zafar2015fairness,bechavod2017penalizing,beutel2019putting}.  We build on these regularization approaches for improving the fairness properties of our recommender system.

\section{Pairwise Fairness for Recommendation}
We begin now with a description of our recommender system, the fairness
concerns, and our metrics for them.

\newcommand{\maybeline}{}
\begin{table}
	\centering
	\begin{tabular}{c|l}
		Symbol & Definition\\\hline
		$\vec{q}$ & Query consisting of user and context features \\ \maybeline
		$\mathcal{R}_\vec{q}$ & The set of relevant items for $\vec{q}$ \\\maybeline
		$y, \hat{y}$ & User click feedback and prediction of it \\ \maybeline
		$z, \hat{z}$ & Post-click engagement and prediction of it\\ \maybeline
		$f_\theta(\vec{q},\vec{v})$ & Predictions $(\hat{y},\hat{z})$ for $\vec{q}$ on item $\vec{v}$ \\ \maybeline
		$g(\hat{y}, \hat{z})$ & Monotonic ranking function from predictions \\ \maybeline
		$\compare$ & Comparison between items $j,j'$ based on $g$ and $f$ \\ \maybeline
		$\ell_\vec{q}(j)$ & Position of $j$ in ranked list of $\mathcal{R}_\vec{q}$ \\ \maybeline
		$s_j$ & Binary sensitive attribute for item $j$ \\ \maybeline
		$\mathcal{D}$ & Total dataset of tuples $\{(\vec{q},\vec{v},y,z)\}$ \\ \maybeline
		$\mathcal{P}$ & Dataset of comparisons $\{((\vec{q},\vec{v},y,z),(\vec{q}',\vec{v}',y',z'))\}$
	\end{tabular}
	\caption{Notation used throughout the paper.}
	\label{tab:notation}
\end{table}

\subsection{Recommendation Environment}
\label{sub:formulation}
We consider a production recommender system that is recommending a personalized list of $K$ items to users.  
We consider a cascading recommender \cite{wang2011cascade,he2014practical,covington2016deep}, with a set of retrieval systems
\cite{chen2018top} followed by a ranking system \cite{covington2016deep,ma2018modeling}.
We assume that the retrieval systems return a set $\mathcal{R}$ of $M'$ relevant items from
the total corpus $\mathcal{J}$ of $M$ items, where $M \gg M' \geq K$.  The ranking model must
then score and rank $M'$ items in $\mathcal{R}$ to get a final ranking of $K$
items.  Herein, we focus primarily on the role of the ranker.

Whenever a recommendation is made, the system observes user features $\vec{u}_i$
for user $i$ and a set of context features $\vec{c}$, such as timing or
device information; together we will refer to this as the query $\vec{q} =
(\vec{u},\vec{c})$.  In addition, for each item $j \in \mathcal{J}$ we  observe
feature vector $\vec{v}_j$; this can include sparse representation or learned
embeddings for the item as well as any other properties tied to the item.  
The ranker performs its ranking based on estimates of user feedback, which could
be clicks, ratings \cite{koren2009matrix}, dwell-time on articles \cite{yi2014beyond}, later purchase of items, etc.
For our system, we will estimate if the user clicked on the item $y \in
\{0,1\}$ as well as user engagement after a click on the item $z \in
\mathbb{R}$, such as dwell-time, purchases, or ratings.  As such, our dataset consists of historical examples $\mathcal{D} = \{ (\vec{q},\vec{v},y,z) \}$.
(Note, because $z$ is user engagement after a click, if no click occurs, $z =
0$.)  $\mathcal{D}$ only contains examples that have been recommended
to the user previously.

The ranker is a model $f_\theta$ parameterized by $\theta$;
the model is trained to predict the user engagement $f_\theta(\vec{q},\vec{v}) \rightarrow (\hat{y},\hat{z}) \approx (y,z)$.
Finally, a ranking of the items is produced by a monotonic scoring function $g(\hat{y}, \hat{z})$ and the user is shown the top $K$ items from the relevant items $\mathcal{R}$ ordered by $g$.

\subsection{Motivating Fairness Concerns}
As discussed previously, a wide variety of fairness concerns have been highlighted in the literature. 
In this work, we primarily focus on the risk to groups of items from being under-recommended \cite{ekstrand2018exploring,beutel2017beyond,mehrotra2018towards,2019arXiv190204056S}.  
For example, if a social network under-ranked posts by a given demographic group, that could limit the group's visibility and thus engagement on the service.
If a comment section of a website was personalized, and if a demographic group of users' comments were under-ranked, then that demographic would have less of a voice on the website.
In a more abstract sense, we assume that each item $j$ has sensitive attribute $s_j \in \{0,1 \}$.  We will work to measure if items from one group are systematically under-ranked.

Although not our primary focus, these issues could align with user group concerns if a group of items is preferred by a particular user group.  
This framework could be explicitly extended to incorporate user groups as well.  If each user has a sensitive attribute, we can compute \emph{all} of the following metrics over each user group and compare performance across the groups.  For example, if we are concerned that a social network is under-ranking items about a particular topic to a particular demographic, we could compare the degree of under-ranking of that topic's content across demographic groups.

\subsection{Pairwise Fairness Metric}
\label{sub:metrics}
While the above fairness goals may seem worthwhile, we must make precise what it means for an item to be ``under-ranked.''  
Here, we draw on the intuition of \citet{hardt2016equality} for equality of
odds, where the fairness of a
classifier is quantified by comparing either its false positive rate and/or false negative rate.  Stated differently, given an item's label is positive, what is the probability the classifier will predict it to be positive.  This works well in classification because the model's prediction can be compared to a predefined threshold.

In recommender systems, it is less clear what a positive prediction is, even if we restrict our analysis to clicks ($y$) and ignore engagement ($z$).  For
example, if an item is clicked, $y=1$, and the predicted probability of click
is $\hat{y} = 0.6$, is this a positive prediction?  It can be perceived as an
under-prediction of $0.4$, but it may still be the top-ranked item if all other
items have a predicted $\hat{y} < 0.6$.  As such, understanding
errors in the pointwise predictions requires comparing the predictions of items
for the same query.

We begin with defining a pairwise accuracy: what is the probability
that a clicked item is ranked above another relevant unclicked item, for the same query:
\begin{align}
	{\rm \pairacc} \defeq P( g(f_\theta(\vec{q},\vec{v}_j)) > g(f_\theta(\vec{q},\vec{v}_{j'})) | y_{\vec{q},j} > y_{\vec{q},j'}, j,j' \in \mathcal{R}_\vec{q})
\end{align}
With this definition, we have a sense of how often the ranking system ranks the clicked item well.
For succinctness, we will use $c_\vec{q}(j,j') \defeq \mathbb{1}[g(f_\theta(\vec{q},\vec{v}_j)) > g(f_\theta(\vec{q},\vec{v}_{j'}))]$ to represent the comparison between the predictions for item $j$ and $j'$ on query $\vec{q}$; we will hide the term $j,j' \in \mathcal{R}_\vec{q}$, but we only consider comparisons among relevant items for all following definitions.

As with much of the rest of fairness research, we are concerned with the
relative performance across groups, not the absolute performance.
As such, we can compare:
\begin{align*}
	P( \compare | y_{\vec{q},j} > y_{\vec{q},j'}, s_j = 0)
	 = P( \compare | y_{\vec{q},j} > y_{\vec{q},j'}, s_j = 1)
\end{align*}
That is, is the \pairacc for items from one group $S = 0$ higher or lower than the \pairacc for items from the other group $S=1$\footnote{We focus on $s_j$ rather than $s_{j'}$ as item $j$ is the clicked item and we are concerned with ranking the clicked item well.  We will incorporate $s_{j'}$ in later metrics.}.

While this is an intuitive metric, it is problematic in that it ignores entirely user engagement $z$ and thus possibly runs the risk of promoting clickbait that users don't ultimately value.  As such, we can follow the approach of conditioning on other dependent signals as in \cite{ritov2017conditional,beutel2019putting}.
\begin{definition}[Pairwise Fairness]
	A model $f_\theta$ with ranking formula $g$ is considered to obey pairwise
	fairness if the likelihood of a clicked item being ranked above another
	relevant unclicked item is the same across both groups, conditioned on the items have been engaged with the same amount: 
\begin{align}
	P( \compare | y_{\vec{q},j} > y_{\vec{q},j'}, s_j = 0, z_{\vec{q},j} = \tilde{z})  = P( \compare | y_{\vec{q},j} > y_{\vec{q},j'}, s_j = 1, z_{\vec{q},j} = \tilde{z}), \;\; \forall \tilde{z}.
\end{align}
\end{definition}
\noindent This definition gives us an aggregate notion of ranker accuracy for items from each group.

While this is valuable, it does not distinguish between types of mis-orderings.
This can be problematic in systematically under-exposing items from one group
\cite{DBLP:conf/kdd/SinghJ18}.  For illustration, consider the following two examples where in both cases there are three items from each group $\{A_j\}_{j=1}^3 \cup \{B_j\}_{j=1}^{3}$ and in the first case assume that $A_1$ is clicked and in the second case assume that $B_1$ is clicked.  If in the first case the system gives a ranking $[A_2, A_3, B_1, A_1, B_2, B_3]$ and in the second case the systems gives $[A_1, A_2, A_3, B_1, B_2, B_3]$, we see that the overall pairwise accuracy is the same in both cases, $\frac{2}{5}$, but in the second case, even when an item from group $B$ was of interest (clicked), all group $B$ items ranked below group $A$ items.  Both are problematic in ranking the clicked item low, but the second is more problematic in systematically preferring one group to the other, independent of user preferences.

To deal with this, we can split the above pairwise fairness definition into two
separate criteria: pairwise accuracy between items in the same group and
pairwise accuracy between items from different groups; we will refer to these
metrics as \emph{intra-group pairwise accuracy} and \emph{inter-group pairwise accuracy}, respectively:
\begin{align}
\mbox{Intra-Group Acc.} &\defeq P(\compare | y_{\vec{q},j} > y_{\vec{q},j'}, s_j = s_{j'}, z_{\vec{q},j} = \tilde{z})
\\\mbox{Inter-Group Acc.}  &\defeq  P(\compare | y_{\vec{q},j} > y_{\vec{q},j'}, s_j \neq s_{j'}, z_{\vec{q},j} = \tilde{z})
\end{align}
From these we can define 
\emph{Intra-Group Pairswise Fairness} and \emph{Inter-Group Pairwise
Fairness} criteria.

\begin{definition}[Intra-Group Pairwise Fairness]
	A model $f_\theta$ with ranking formula $g$ is considered to obey intra-group pairwise
	fairness if the likelihood of a clicked item being ranked above another
	relevant unclicked item from the same group is the same independent of group,
	conditioned on the items have been engaged with the same amount: 
\begin{align}
	P( \compare | y_{\vec{q},j} > y_{\vec{q},j'}, s_j = s_{j'} = 0, z_{\vec{q},j} = \tilde{z})  = P( \compare | y_{\vec{q},j} > y_{\vec{q},j'}, s_j =  s_{j'} = 1, z_{\vec{q},j} = \tilde{z}), \forall \tilde{z}.
\end{align}
\end{definition}

\begin{definition}[Inter-Group Pairwise Fairness]
	A model $f_\theta$ with ranking formula $g$ is considered to obey inter-group pairwise
	fairness if the likelihood of a clicked item being ranked above another
	relevant unclicked item from the opposite group is the same independent of group,
	conditioned on the items have been engaged with the same amount: 
\begin{align}
	P( \compare | y_{\vec{q},j} > y_{\vec{q},j'}, s_j = 0, s_{j'} = 1, z_{\vec{q},j} = \tilde{z}) = P( \compare | y_{\vec{q},j} > y_{\vec{q},j'}, s_j = 1, s_{j'} = 0, z_{\vec{q},j} = \tilde{z}), \forall \tilde{z}.
\end{align}
\end{definition}
The Intra-Group Pairwise Fairness to some degree acts similarly to the overall
Pairwise Fairness notion as it indicates the ability of the recommender system
to rank well the item of interest to the user.  The Inter-Group Pairwise
Fairness gives us further insight into whether mistakes in ranking are at the
cost of the group as a whole.

We can see this more clearly by decomposing the overall pairwise accuracy as follows:
\begin{align}
\label{eq:marginal}
&P( \compare | y_{\vec{q},j} > y_{\vec{q},j'}, s_j = \tilde{s}, z_{\vec{q},j} = \tilde{z}, j,j' \in \mathcal{R})
\\&\hspace{6mm} = \sum_{\tilde{s}' = \{\tilde{s}, 1-\tilde{s}\}} 
P( s_{j'} = \tilde{s}' | y_{\vec{q},j} > y_{\vec{q},j'}, s_j = \tilde{s}, z_{\vec{q},j} = \tilde{z}, j,j' \in \mathcal{R}) \nonumber
\\&\hspace{21mm}
\times P( \compare | y_{\vec{q},j} > y_{\vec{q},j'}, s_j = \tilde{s}, s_{j'} = \tilde{s}', z_{\vec{q},j} = \tilde{z}, j,j' \in \mathcal{R}) \nonumber
\end{align}
That is, we find we can break up the pairwise comparisons into two sets, intra-group and inter-group comparisons, and that the overall pairwise accuracy is a weighted sum of the inter-group accuracy and intra-group accuracy, where the weights are determined by the probability of seeing a pair of that form (inter-group or intra-group) with the corresponding click and engagement.
Together, these metrics give us a better sense of the fairness of the recommender system.

\subsection{Measurement}
\label{sub:measure}
While the above definitions offer a goal of how we would like a recommender
system to perform, measuring the degree to which a recommender system meets
these goals presents unique challenges.  As discussed in the introduction,
users and items in recommender systems are highly dynamic, and we typically
only observe user feedback on previously recommended items, which makes metrics
vulnerable to bias in the previous recommender system.

However, for all three fairness definitions given above, we would like to have
unbiased estimates of user preferences between pairs of items.  In order to do
this we run \emph{randomized experiments} over a small percentage of queries
to the recommender system.  The experimental description below is all assumed
to operate over the subset of queries in the experimental slice.

For the experimental queries we will show the user a pair of items in positions
two and three of the recommended slate; this prevents any position bias
\cite{agarwal2018estimating} where items that the recommender system
ranks low are less likely to be clicked than items ranked high, irrespective of the particular items.
Because the definitions above are all over arbitrary pairs of items from the set of relevant items for the given query, for each query two items are chosen at random from $\mathcal{R}_\vec{q}$ and their ordering in positions two and three is also randomized.

Among all queries in the experimental slice only a small fraction will have
clicks on one of the items in the randomized item pair.  Whenever an item in the randomized item pair is clicked, we record the query, pair, which item was clicked, and the subsequent engagement $z$.  With this, we can compute all of the probabilities in the fairness definitions above. In practice we discretize $z$ into buckets for easier comparison.

Note, as can be seen through this experiment, we cannot know the engagement we would have observed if the unclicked item had been clicked.  This motivates our current metric design of conditioning on $z$ rather than estimating the accuracy of $\hat{z}$, since we can only know $\hat{z}$ for one item in the pair.

\paragraph{Discussion}
These metrics connect the performance of the ranking model to the end fairness properties of the resulting ranking.  One underlying assumption is that the retrieval system that determines the set of relevant items, $\mathcal{R}_\vec{q}$, is in some sense ``fair.''  We believe further research is needed to understand both what does it mean for a retrieval system to be ``fair'' and how any degree of bias in the retrieval system propagates through the ranking system to effect the end ranking experience.

\section{Theoretical Analysis}
While hopefully the above definitions are clear and well motivated, we find
that upon further inspection they obey a number of fascinating properties.

\subsection{Ranking Interpretation}
While the metrics have thus far primarily been described similar to pairwise
accuracy, they can be interpreted through the lens of ranking.  
That is, the recommender system sorts $\mathcal{R}_\vec{q}$
according to $g$ and $f_\theta$.  We will use
$\ell(j)_\vec{q}$ to denote the position of item $j \in \mathcal{R}_\vec{q}$ in the sorted list of items:
\begin{align}
	\ell(j)_\vec{q} = |\mathcal{R}_\vec{q}| - \sum_{j' \in \mathcal{R}_\vec{q} \setminus j} \mathbb{1}[g(f_\theta(\vec{q},\vec{v}_j)) > g(f_\theta(\vec{q},\vec{v}_{j'}))]
\end{align}
From this perspective, we find that we can connect pairwise fairness to fairness with respect to the ranked position:
\begin{theorem}
	If a recommender system achieves pairwise fairness then the expected
	position of a clicked item with $z$ engagement is the same across groups.
\end{theorem}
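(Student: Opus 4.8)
The plan is to work directly from the definition of the ranked position and show that the expected position of a clicked item reduces to an affine function of the pairwise accuracy, whose group-independence is exactly what Pairwise Fairness provides. First I would condition on the event that $j$ is the clicked item with engagement $z_{\vec{q},j}=\tilde{z}$ and sensitive attribute $s_j=\tilde{s}$, apply the position formula, and use linearity of expectation:
\begin{align*}
	E\bigl[\ell(j)_\vec{q} \mid y_{\vec{q},j}=1, z_{\vec{q},j}=\tilde{z}, s_j=\tilde{s}\bigr]
	= E\bigl[|\mathcal{R}_\vec{q}|\bigr] - E\Bigl[\textstyle\sum_{j' \in \mathcal{R}_\vec{q} \setminus j} \compare \,\Big|\, y_{\vec{q},j}=1, z_{\vec{q},j}=\tilde{z}, s_j=\tilde{s}\Bigr].
\end{align*}

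The key step is to identify the inner sum with the pairwise accuracy. In the sparse-click regime the competitor items $j'$ are unclicked, so $y_{\vec{q},j} > y_{\vec{q},j'}$ holds for every term and each indicator has conditional expectation $P(\compare \mid y_{\vec{q},j} > y_{\vec{q},j'}, z_{\vec{q},j}=\tilde{z}, s_j=\tilde{s})$, which is precisely the \pairacc of a clicked group-$\tilde{s}$ item at engagement $\tilde{z}$, marginalized over $s_{j'}$. Writing this quantity as $\mathrm{PA}_{\tilde{s}}(\tilde{z})$ and assuming the number of relevant items is distributed identically regardless of $\tilde{s}$, the inner expectation collapses to $E[|\mathcal{R}_\vec{q}|-1]\cdot \mathrm{PA}_{\tilde{s}}(\tilde{z})$, so that
\begin{align*}
	E\bigl[\ell(j)_\vec{q} \mid y_{\vec{q},j}=1, z_{\vec{q},j}=\tilde{z}, s_j=\tilde{s}\bigr]
	= E\bigl[|\mathcal{R}_\vec{q}|\bigr] - E\bigl[|\mathcal{R}_\vec{q}|-1\bigr]\cdot \mathrm{PA}_{\tilde{s}}(\tilde{z}).
\end{align*}
Pairwise Fairness states that $\mathrm{PA}_{0}(\tilde{z}) = \mathrm{PA}_{1}(\tilde{z})$ for all $\tilde{z}$, so the right-hand side no longer depends on $\tilde{s}$ and the two expected positions coincide, which is the claim.

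I expect the main obstacle to be making the reduction exact rather than merely heuristic. Two points need care: first, I must justify that $|\mathcal{R}_\vec{q}|$ (and, more subtly, the group composition of the competitor items, which is what licenses marginalizing over $s_{j'}$ to obtain the single pairwise accuracy appearing in the definition) has the same conditional distribution whether the clicked item sits in group $0$ or group $1$; conditioning on $|\mathcal{R}_\vec{q}|$ discharges the first half. Second, competitor items that are themselves clicked satisfy $y_{\vec{q},j}=y_{\vec{q},j'}$, a case on which Pairwise Fairness is silent; the clean statement therefore relies on the sparse-click assumption that essentially one item per query is clicked, which I would make explicit as the operating regime.
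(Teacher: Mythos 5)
Your proposal takes essentially the same route as the paper's proof: both rewrite the expected position $\mathbb{E}[\ell(j)_\vec{q}]$ via the position formula as an affine function of the \pairacc and then invoke the Pairwise Fairness condition to conclude that this accuracy, and hence the expected position, is the same for $s_j=0$ and $s_j=1$. If anything, yours is the more careful rendition, since you make explicit the assumptions the paper's one-line chain of equalities leaves implicit (the sparse-click regime so every competitor satisfies $y_{\vec{q},j}>y_{\vec{q},j'}$, the group-independence of the distribution of $|\mathcal{R}_\vec{q}|$ and of the competitor composition that licenses marginalizing over $s_{j'}$) and you correct the paper's factor $|\mathcal{R}_\vec{q}|$ to the exact $|\mathcal{R}_\vec{q}|-1$ in front of the pairwise accuracy.
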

\begin{proof}
	This falls out of the definition of pairwise accuracy and pairwise fairness:
	\begin{align*}
		&\mathbb{E}_{\vec{q} | y_{\vec{q},j} = 1, s_j = 0, z_{\vec{q},j} = \tilde{z}}[\ell(j)_\vec{q}] 
		\\&\hspace{10mm}= \mathbb{E}_{\vec{q}}[|\mathcal{R}_\vec{q}| (1 - {\rm \pairacc}_\vec{q}(j))]
		\\&\hspace{10mm}= \mathbb{E}_{\vec{q}}[|\mathcal{R}_\vec{q}| 
	(1 - P( \compare | y_{\vec{q},j} > y_{\vec{q},j'}, s_j = 0, z_{\vec{q},j} = \tilde{z}))]
		\\ &\hspace{10mm}= \mathbb{E}_{\vec{q}}[|\mathcal{R}_\vec{q}| 
	(1 - P( \compare | y_{\vec{q},j} > y_{\vec{q},j'}, s_j = 1, z_{\vec{q},j} = \tilde{z}))]
		\\ &\hspace{10mm}= \mathbb{E}_{\vec{q} | y_{\vec{q},j} = 1, s_j = 1, z_{\vec{q},j} = \tilde{z}}[\ell(j)_\vec{q}]
	\end{align*}
\end{proof}
As such, we see that we can interpret pairwise recommender fairness as equivalent to the notion
that the position of a clicked and engaged with item should not depend on the group membership on average, aligning with position bias highlighted in \cite{biega2018equity,2019arXiv190204056S}.  (This analysis is similar to probabilistic interpretations in traditional pairwise IR \cite{cao2007learning}, but now in the context of recommender system fairness.)

The inter-group and intra-group pairwise accuracies also connect to the rank position of the clicked item.  That is, we can decompose:
\begin{align*}
	\ell(j)_\vec{q} &= |\mathcal{R}_\vec{q}| - \sum_{j' \in \mathcal{R}_\vec{q} \setminus j} \mathbb{1}[g(f_\theta(\vec{q},\vec{v}_j)) > g(f_\theta(\vec{q},\vec{v}_{j'}))]
	\\ &=|\mathcal{R}_\vec{q}| -  \left(\sum_{j' \in \mathcal{R}_\vec{q} \setminus j | s_j = s_{j'}} c_\vec{q}(j,j')
	+ \sum_{j' \in \mathcal{R}_\vec{q} \setminus j | s_j \neq s_{j'}} c_\vec{q}(j,j')\right)
	\\ &= |\mathcal{R}_\vec{q}| (1 - P(s_{j'} = s_j | j' \in \mathcal{R}_\vec{q})P(c_\vec{q}(j,j') | s_{j'} = s_j, j' \in \mathcal{R}_\vec{q}) 
	\\&\hspace{14mm} - P(s_{j'} \neq s_j | j' \in \mathcal{R}_\vec{q})P(c_\vec{q}(j,j') | s_{j'} \neq s_j, j' \in \mathcal{R}_\vec{q}) )
	\\ &= |\mathcal{R}_\vec{q}| - |\{j' | j' \in \mathcal{R}_\vec{q}, s_j = s_{j'} \}|P(c_\vec{q}(j,j') | s_{j'} = s_j, j' \in \mathcal{R}_\vec{q}) 
	\\&\hspace{11mm} - |\{j' | j' \in \mathcal{R}_\vec{q}, s_j \neq s_{j'} \}|P(c_\vec{q}(j,j') | s_{j'} \neq s_j, j' \in \mathcal{R}_\vec{q})
\end{align*}
Here too, we see that the overall ranked position can be decomposed into the position within the ranked list from the same group and position among the ranked list from the other group.  However, because of the possibly varying distributions of the number of comparisons of each type, we believe it makes sense to focus on each of these terms as probabilities.

\subsection{Relation to Pointwise Metrics}
\label{sub:theory_relations}
While our pairwise fairness metric aligns with previously stated goals of fair
ranking, we find that it lies in tension with traditional pointwise metrics.
For example, recommender systems are often evaluated in terms of calibration or
RMSE \cite{koren2009matrix}, and these metrics have been espoused as important fairness
metrics in classification \cite{crowson2016assessing} and in recommendation
\cite{beutel2017beyond,yao2017beyond}.
We show here that these pointwise metrics are insufficient for guaranteeing
pairwise fairness.

For the following proofs we consider a simplified case where $z = 0$ and
$g(y,z) \defeq y$.  This can be thought of ranking items by predicted click
through rate (pCTR). For each group $s$ we denote its average label $y$ by
$\bar{y}_s \defeq \mathbb{E}_{\vec{q},j}[y_{\vec{q},j} | s_j = s]$.

\paragraph{Calibration}
We begin with examining the relationship between calibration and pairwise fairness.
A pCTR model $f(x)$ for labels $y$ is considered calibrated if and only if:
\begin{align}
	E[y | f(x) = \tilde{y}] = \tilde{y}
\end{align}
That is, among examples receiving a particular prediction, the average label
for those examples needs to be equal to the predicted value.  In the context of
fairness, this would be evaluated over examples from one group.

\begin{lemma}
A calibrated model is insufficient for guaranteeing pairwise ranking fairness.
\label{lemma:calibration}
\end{lemma}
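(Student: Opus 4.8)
The plan is to prove this insufficiency claim by exhibiting a counterexample: a data distribution together with a model $f_\theta$ that is calibrated on each group yet violates the pairwise fairness condition of Definition 1. Since the statement only asserts that calibration is \emph{insufficient}, a single such construction suffices, and I would make it as explicit as possible.

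First I would specialize the definitions to the regime fixed just above the lemma, namely $z=0$ and $g(\hat y,\hat z)\defeq\hat y$. Then $\compare=\mathbb{1}[\hat y_j>\hat y_{j'}]$, and the pairwise accuracy conditioned on $s_j=s$ reduces to $P(\hat y_j > \hat y_{j'}\mid y_{\vec q,j}=1, y_{\vec q,j'}=0, s_j=s)$ — an AUC-type quantity measuring how often a clicked item outranks an unclicked one. The central observation to articulate is that calibration only constrains the conditional mean $E[y\mid \hat y]$ (per group, in the fairness version), whereas this pairwise accuracy depends on the \emph{ordering} of positives relative to negatives; the two are orthogonal, so a model can be calibrated on both groups while discriminating much better on one group than the other.

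Next I would give the explicit construction. I would take a distribution in which the randomized pairs are drawn within a single group — a legitimate special case of the experimental data — so that conditioning on $s_j=s$ also forces $j'$ into group $s$ and the per-group pairwise accuracy becomes the within-group AUC. For group $0$ I would let the model separate labels perfectly: every clicked item receives $\hat y=1$ and every unclicked item receives $\hat y=0$, which is calibrated ($E[y\mid \hat y=1]=1$ and $E[y\mid \hat y=0]=0$) and yields pairwise accuracy $1$. For group $1$ I would let the model emit the constant prediction $\hat y=\bar y_1$ on every item, which is calibrated ($E[y\mid \hat y=\bar y_1]=\bar y_1$) but produces $\hat y_j=\hat y_{j'}$ on every pair, so $\mathbb{1}[\hat y_j>\hat y_{j'}]=0$ and the pairwise accuracy is $0$. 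Setting the base rates equal, $\bar y_0=\bar y_1$, makes clear that the asymmetry comes from discrimination rather than from differing click rates. The two per-group pairwise accuracies are then $1$ and $0$, so Definition 1 fails while calibration holds on each group, which establishes the lemma.

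The main thing to handle carefully is the mixed-group nature of Definition 1: there $j'$ ranges over items from either group, not just the group of $j$. Restricting the constructed distribution to within-group pairs neutralizes this, but I would note the alternative of keeping cross-group pairs and arranging the two groups symmetrically so the cross-group contributions cancel. The only other subtlety is the treatment of ties in the strict comparison $g(f_\theta(\vec q,\vec v_j))>g(f_\theta(\vec q,\vec v_{j'}))$; the constant predictor for group $1$ produces only ties, giving accuracy $0$, but even under a $1/2$ tie-breaking convention the group-$1$ value would be $1/2\neq 1$, so the fairness equality fails regardless. If a non-degenerate example is preferred, I would instead give group $1$ two prediction levels with positives and negatives deliberately intermixed so that its AUC is strictly below that of group $0$ while calibration is preserved; this is more elaborate but changes nothing essential.
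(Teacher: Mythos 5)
Your proposal is correct, but it reaches the lemma by a genuinely different counterexample than the paper's. The paper keeps cross-group pairs and uses a single construction: the per-group base-rate predictor $f(\vec{q},\vec{v}_j) = \bar{y}_{s_j}$ with \emph{unequal} base rates $\bar{y}_{\tilde{s}} > \bar{y}_{\tilde{s}'}$, so every inter-group comparison is won by the higher-base-rate group ($P(c_\vec{q}(j,j')\,|\,y_j > y_{j'}, s_j=\tilde{s}, s_{j'}=\tilde{s}')=1$ versus $0$ in the other direction), and it then invokes the decomposition in Eq.~\eqref{eq:marginal}, together with random tie-breaking within groups, to conclude that overall pairwise fairness fails whenever cross-group pairs occur with positive probability. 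You instead \emph{equalize} the base rates ($\bar{y}_0=\bar{y}_1$), restrict the relevant sets to within-group pairs to neutralize the mixed-group conditioning in Definition~1, and make the two groups differ in discrimination ability: a perfect separator on group $0$ (pairwise accuracy $1$) versus a constant predictor $\hat{y}=\bar{y}_1$ on group $1$ (accuracy $0$ under strict comparison, $1/2$ under random tie-breaking --- you rightly note either suffices). Both are valid; what each buys differs. Your construction isolates the conceptual point more sharply --- calibration pins down conditional means while pairwise accuracy is an ordering (AUC-type) statistic, so the insufficiency persists even with identical base rates and no cross-group exposure asymmetry --- and it avoids any reliance on the marginal decomposition. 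The paper's construction, by contrast, deliberately exhibits the \emph{inter-group} failure mode (one group systematically ranked above the other), which connects to the exposure critique of \citet{DBLP:conf/kdd/SinghJ18} and foreshadows the calibration-versus-fairness tension the paper later observes empirically in Figure~\ref{fig:exposure}; your within-group example does not exhibit that mode, though the lemma as stated does not require it. One small point to make explicit in your write-up: the ``perfect separator'' on group $0$ requires the constructed distribution to make $y$ deterministic given the features, so that the model can realize $\hat{y}=y$ and calibration holds --- this is legitimate in a counterexample but should be stated rather than left implicit.
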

\begin{proof}
	In order to prove this, we offer an example of a calibrated model that does not obey pairwise ranking fairness.
	Let's assume that we learn a model that predicts $f(\vec{q},\vec{v}_j)
	\defeq \bar{y}_{s_j}$ for all examples with an item from group $s$.  
	This model is by definition calibrated per group.  
	
	If we have two groups, $\tilde{s}$ and $\tilde{s}'$, where $\bar{y}_{\tilde{s}} >
	\bar{y}_{\tilde{s}'}$ then $P(\compare) = 1$ for all items $j$ with $s_j = \tilde{s}$ and all items $j'$ with $s_{j'} = \tilde{s}'$.  As such, 
	$P( \compare | y_j > y_{j'}, s_j = \tilde{s}, s_{j'} = \tilde{s}') = 1$ and $P( \compare | y_{j'} > y_{j}, s_j = \tilde{s}, s_{j'} = \tilde{s}') = 0$.
	From this it is clear that inter-group fairness does not hold. 
	We assume ties are split randomly, giving us $P( \compare | y_j > y_{j'}, s_j = s_{j'}) = 0.5$.  
	Based on Eq. \eqref{eq:marginal}, we find that as long as $P(s_j \neq s_{j'} | j,j' \in \mathcal{R}_\vec{q}, y_j > y_{j'}) > 0$ then overall pairwise fairness does not hold.
\end{proof}
This problem is highly similar to the issue pointed out by \cite{DBLP:conf/kdd/SinghJ18} with respect to ranking and exposure, and we see here holds true even among pairwise comparisons.

\paragraph{Squared Error}
Another pervasive metric in recommender systems is mean squared error
(MSE) \cite{koren2009matrix}.  This metric, and modifications of it, have been
proposed for evaluating the fairness of collaborative filtering systems
\cite{beutel2017beyond,yao2017beyond}.  While this may be worthwhile to
encourage accuracy across groups,we find that it too is insufficient for
guaranteeing pairwise fairness.

\begin{lemma}
Equal MSE across groups is insufficient for guaranteeing pairwise ranking fairness.
\end{lemma}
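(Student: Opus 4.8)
The plan is to prove this by counterexample, mirroring the structure of the calibration argument in Lemma~\ref{lemma:calibration}. In fact, I would reuse the very same model: the per-group mean predictor $f(\vec{q},\vec{v}_j) \defeq \bar{y}_{s_j}$, which assigns every item the average label of its group. The observation that lets this model serve double duty is that, for binary labels, its per-group MSE is forced into a rigid form whose symmetry I can exploit to equalize MSE across groups while keeping the group means distinct.

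First I would compute the per-group squared error of this predictor. Since $y_{\vec{q},j}\in\{0,1\}$ we have $y^2 = y$, so
\begin{align*}
\mathbb{E}\big[(y - \bar{y}_{s})^2 \mid s_j = s\big] = \bar{y}_s - \bar{y}_s^2 = \bar{y}_s(1-\bar{y}_s).
\end{align*}
Thus the MSE of the group-mean model on group $s$ depends on $\bar{y}_s$ only through the quantity $\bar{y}_s(1-\bar{y}_s)$, which is symmetric about $\tfrac{1}{2}$.

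Next I would choose the two groups so that their average labels are reflections of one another, e.g. $\bar{y}_0 = 1 - \bar{y}_1$ with $\bar{y}_0 \neq \bar{y}_1$ (concretely $\bar{y}_0 = 0.3$, $\bar{y}_1 = 0.7$). By the computation above the two groups then share identical MSE, $\bar{y}_0(1-\bar{y}_0) = \bar{y}_1(1-\bar{y}_1)$, so the model exhibits equal MSE across groups. Yet because its predictions are constant within each group and $\bar{y}_0 \neq \bar{y}_1$, every item from the higher-mean group deterministically outranks every item from the lower-mean group. This is exactly the configuration analyzed in Lemma~\ref{lemma:calibration}: with $\bar{y}_{\tilde{s}} > \bar{y}_{\tilde{s}'}$ we obtain $P(\compare \mid y_j > y_{j'}, s_j = \tilde{s}, s_{j'} = \tilde{s}') = 1$ and the reverse probability $0$, so inter-group pairwise fairness fails, and by Eq.~\eqref{eq:marginal} overall pairwise fairness fails whenever cross-group pairs occur with positive probability.

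The one step carrying content beyond recycling Lemma~\ref{lemma:calibration} is verifying that equal MSE is genuinely compatible with $\bar{y}_0 \neq \bar{y}_1$; this is precisely what the symmetry of $v \mapsto v(1-v)$ guarantees for binary labels, and more generally one only needs the two groups to share the same second central moment of $y$ while differing in mean. The hard part, such as it is, is simply confirming that the same constant-prediction model that is calibrated is also MSE-matched under this symmetric choice, after which the fairness violation is inherited verbatim from the calibration proof.
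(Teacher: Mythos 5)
Your proposal is correct and matches the paper's proof essentially step for step: the same group-mean predictor $f(\vec{q},\vec{v}_j) \defeq \bar{y}_{s_j}$, the same symmetric choice $\bar{y}_{\tilde{s}} = 1 - \bar{y}_{\tilde{s}'}$ exploiting that ${\rm MSE}_s = \bar{y}_s - \bar{y}_s^2$ is symmetric about $\tfrac{1}{2}$, and the same inherited fairness violation from the calibration argument via Eq.~\eqref{eq:marginal}. Your closing observation that equal second central moments with differing means would suffice in general is a mild bonus beyond the paper, but the core argument is identical.
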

\begin{proof}
	As above we will demonstrate an example of a model that achieves equal MSE across groups but does not obey pairwise fairness.
	Again, let us assume that we learn a model that predicts $f(\vec{q},\vec{v}_j) \defeq \bar{y}_{s_j}$.
	We assume we have two groups, $\tilde{s}$ and $\tilde{s}'$, where 
$\bar{y}_{\tilde{s}} = 1 - \bar{y}_{\tilde{s}'}$ and
$\bar{y}_{\tilde{s}} >     \bar{y}_{\tilde{s}'}$.  
We see that:
\begin{align*}
	{\rm MSE}_s &= \bar{y}_s (1 - \bar{y}_s)^2 + (1 - \bar{y}_s) (0 - \bar{y}_s)^2
	= \bar{y}_s - \bar{y}_s^2
\end{align*}
Through simply substituting in $\bar{y}_{\tilde{s}} = 1 - \bar{y}_{\tilde{s}'}$ in the definition above we can see that ${\rm MSE}_{\tilde{s}} = {\rm MSE}_{\tilde{s}'}$. 

Just as in the proof above, because
$\bar{y}_{\tilde{s}} >
	\bar{y}_{\tilde{s}'}$, we find $P(\compare) = 1$ for all items $j$ with $s_j = \tilde{s}$ and all items $j'$ with $s_{j'} = \tilde{s}'$.  As such, 
	$P( \compare | y_j > y_{j'}, s_j = \tilde{s}, s_{j'} = \tilde{s}') = 1$ and  $P( \compare | y_{j'} > y_{j}, s_j = \tilde{s}, s_{j'} = \tilde{s}') = 0$.  Again, it is clear that inter-group fairness does not hold.
	If we split ties randomly, such that $P( \compare | y_j > y_{j'}, s_j = s_{j'}) = 0.5$, and as long as $P(s_j \neq s_{j'} | j,j' \in \mathcal{R}_\vec{q}, y_j > y_{j'}) > 0$, then we again find that overall pairwise fairness does not hold either.
\end{proof}
As a result, while matching MSE across groups is intuitively valuable for fairness, it is insufficient for making any claims about the end ranking.  
Looking at the example in the proof, it is clear that this is in part due to the fact that MSE does not distinguish between over- and under-prediction.  
However, even taking that into account, MSE ignores the relative ranking and thus it is hard to determine what an improvement of $\epsilon$ in MSE means for ranking accuracy.

\section{Pairwise Regularization to Improve Fairness}
\label{sec:modeling}
With an understanding of our fairness goals, we now
ask: how can we learn a recommender system that achieves these fairness
properties?  As discussed previously, most production recommenders are
pointwise recommenders trained to predict $y$ and $z$, so we would like a
modeling approach that does not require throwing out existing techniques.

To encourage fairness during training, we build on the regularization approach first proposed by \citet{zafar2015fairness} and expanded upon by \citet{beutel2019putting}.  In particular, \citet{beutel2019putting}
optimized for equality of opportunity in classification
\cite{hardt2016equality} by minimizing the correlation between the group
membership and the model's predictions among examples with the same label.
In our setting, we are concerned with the relative ordering, so we must modify this objective.

We assume that our model is trained with a loss $L(f_\theta(\vec{q},\vec{v}), (y,z))$; for example, if squared error were used then 
$L(f_\theta(\vec{q},\vec{v}), (y,z)) \defeq (\hat{y} - y)^2 + (\hat{z} - z)^2$.
Further, we assume that we know $g(\hat{y},\hat{z})$ and that it is
differentiable.  Given this, we train our model $f_\theta$ with the following objective:
\begin{align}
	\min_\theta & \left( \sum_{ (\vec{q},j, y, z) \in \mathcal{D} } L(f_\theta(\vec{q},\vec{v}_j), (y,z)) \right) + |{\rm Corr}_\mathcal{P}(A,B)|
\end{align}
Here, $\mathcal{D}$ is our original training data and $\mathcal{P}$ is the
experimental data from Section \ref{sub:measure} consisting of pairs of
tuples $ ((\vec{q},j, y, z), (\vec{q}',j', y', z') )$.  The second term, the
absolute correlation is computed as the correlation between two terms, $A$
and $B$, both random variables over pairs from $\mathcal{P}$:
\begin{align}
	A &=  (g(f_\theta(\vec{q},\vec{v}_j)) - g(f_\theta(\vec{q}',\vec{v}_{j'}))) (y - y')
	\\B &=  (s_j - s_{j'})(y - y')
\end{align}
That is, the pairwise regularizer calculates the correlation between the
residual between the clicked and unclicked item and the group membership of the
clicked item.  As a result, the model is penalized if the its ability to predict which item was clicked is better for one group than the other.

To make sure there is sufficient data for a meaningful calculation, we
rebalance $\mathcal{P}$ to have approximately half of the data with the clicked item belonging to group $s = 0$ and the other half with the clicked item belonging to group $s = 1$.
Further data restrictions can be applied for alternative goals. If we are concerned with intra-group pairwise fairness, we can restrict $\mathcal{P}$ to the set of pairs where $s_j \neq s_{j'}$, and if we are
concerned with a large differences in engagement $z$, we can create buckets of $\mathcal{P}_z$ where all pairs in the set resulted in engagement $z$.
This approach is general enough to be used with pointwise recommenders as well as pairwise recommenders.

As in \cite{beutel2019putting}, this approach does not provably achieve
pairwise fairness but we follow it due to its strong empirical
performance and ease of use, crucial for production
applications.

\section{Experiments}
To understand our pairwise fairness metric and our proposed modeling improvements, we study the performance of a large-scale, production recommender system.  We offer analysis of the state-of-the-art production model's performance as well as how our modeling changes effect the system.

\subsection{Experimental Setup}
As described in Section \ref{sub:formulation}, we study a cascading recommender system where multiple retrieval systems return the set of relevant items for a given query, followed by a ranking model.  Here we evaluate the ranking model's performance.  The ranking model is a multi-layer neural network trained in a pointwise-fashion with multiple heads to predict the probability of a click, $y$, and a set of user engagement signals after the click, which we refer to in the aggregate by $z$; this is a similar setup to \cite{he2017neural,ma2018modeling}.  This model is continuously trained on a dataset of interactions with previous recommendations.

We study the performance of the ranker with respect to a sensitive subgroup of items, comparing the performance of this subgroup to the rest of the data, denoted by ``not subgroup.''  The subgroup represents approximately 0.2\%  of all items, but it is a subgroup that we feel is important to analyze for recommendation fairness.  As mentioned previously, we only know the group membership for a small percent of items;  this prevents using serving-time approaches to improve the pairwise fairness metrics.
Following the description in Section \ref{sub:measure}, we gather a dataset $\mathcal{P}$ of random pairs of relevant items shown to the user and recorded when the user clicks on one of the items.  We use a random half of this dataset for the \pairreg and the other half for evaluating the model.

We compare two versions of the model: (1) the production model trained without any attention to fairness, (2) a test model, trained with the same architecture but with the \pairreg to optimize for inter-group pairwise fairness.  
(As this is a live system with both data and training dynamics, we present a model chosen at random from a set of test models.)
As we will see below, we focus on inter-group pairwise fairness as this is the area we find needing more improvement.

Due to the sensitive nature, we cannot report absolute accuracy measures.  Rather, we report the relative performance between the subgroup and the rest of the data.  That is, we aggregate the pairwise accuracy measures across engagement levels through a simple average; and we report the relative ratio of the average accuracy for the ``not subgroup'' divided by the average accuracy for the subgroup.  All plots group engagement $z$ into four levels and maintain the same y-axis scaling so that relative comparisons can be made across them.

\subsection{Baseline Performance}
We begin with an analysis of the production system's performance.  As discussed in Section \ref{sub:metrics}, we analyze the system's performance in terms of: (1) pairwise fairness, (2) intra-group pairwise fairness, and (3) inter-group pairwise fairness.

As the overall pairwise fairness in Figure \ref{fig:prod_fairness_overall} shows,
the production system under-ranks items from the subgroup when the subsequent level of engagement is low, but interestingly slightly over-ranks items from the subgroup when the subsequent level of engagement is high.  In total, we find that the non-subgroup items have an 8.3\% advantage overall\footnote{That is, the  pairwise accuracy of ``not subgroup'' divided by the pairwise accuracy of ``subgroup'' is 1.083.}.

Second, we examine the performance within each group -- the intra-group pairwise accuracy.  As can be seen in Figure \ref{fig:prod_fairness_intra}, across all levels of engagement the model has more difficulty selecting the clicked item when comparing subgroup items than when comparing non-subgroup items.  In total, this puts the non-subgroup items at a 14.9\% advantage in intra-group pairwise fairness.  We have found that this is in part due to the subgroup being small while there is far more diversity among the non-subgroup items, making comparisons easier.  When further filtering the subgroup comparisons to remove highly-similar item comparisons, we find no meaningful difference in performance between the subgroup and the non-subgroup.

While both of the above results suggest some deficiencies, we find that the story is significantly more dramatic when looking at the inter-group pairwise accuracy.  As seen in Figure \ref{fig:prod_fairness_inter}, across all levels of engagement we find that the subgroup items are significantly under-ranked relative to the non-subgroup items.  Overall, we find that the non-subgroup items have a 35.6\% advantage.  Further, we see that the pairwise accuracy for non-subgroup items in inter-group pairs is notably higher than in intra-group pairs, suggesting that the model is taking advantage of this difference in items.  
This suggests that subgroup items, even when of interest to the user, are ranked under non-subgroup items.  Because of the implication on subgroup experience and the more dramatic nature of the results, we focus herein on improving the inter-group pairwise fairness.

\begin{figure}[t]
    \centering
    \begin{subfigure}[b]{0.45\columnwidth}
        \includegraphics[width=\textwidth]{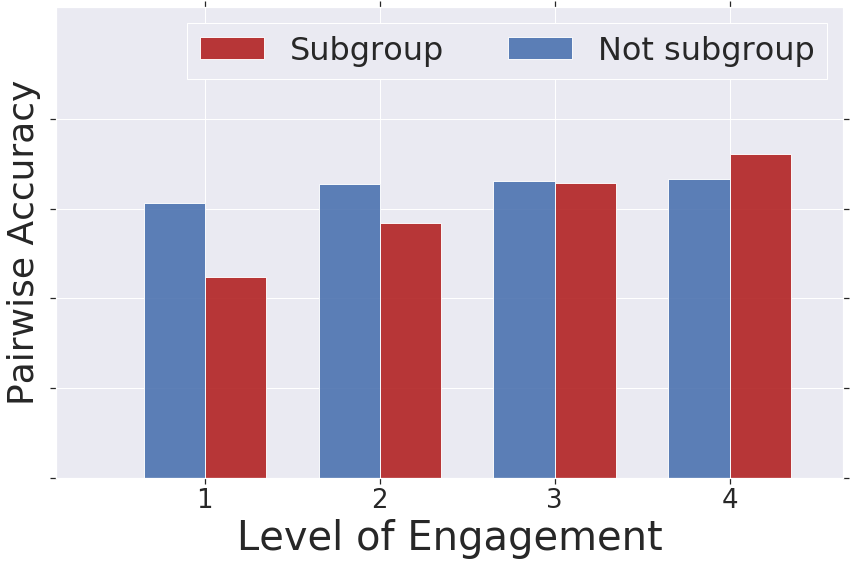}
        \caption{Original}
        \label{fig:prod_fairness_overall}
    \end{subfigure}
    \hspace{2mm}
    \begin{subfigure}[b]{0.45\columnwidth}
        \includegraphics[width=\textwidth]{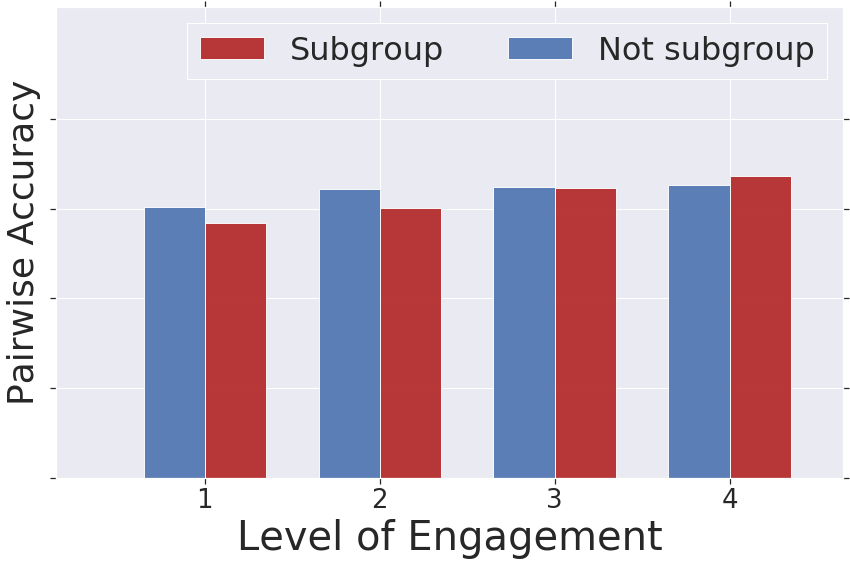}
        \caption{After \pairregshort}
    \end{subfigure}
    \caption{We find some gaps in \emph{overall pairwise accuracy} that are improved through the pairwise regularization.}
    \label{fig:fairness_overall}
\end{figure}

\begin{figure}[t]
    \centering
    \begin{subfigure}[b]{0.45\columnwidth}
        \includegraphics[width=\textwidth]{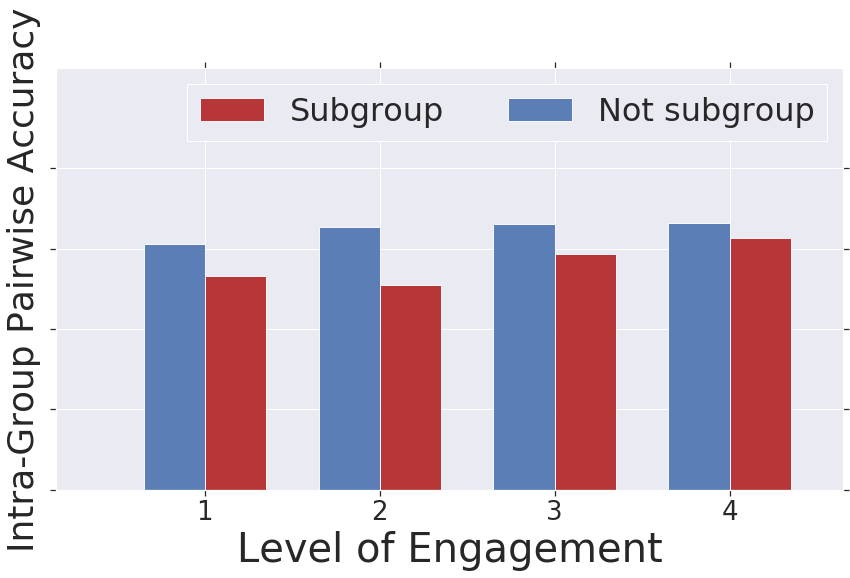}
        \caption{Original}
        \label{fig:prod_fairness_intra}
    \end{subfigure}
    \hspace{2mm}
    \begin{subfigure}[b]{0.45\columnwidth}
        \includegraphics[width=\textwidth]{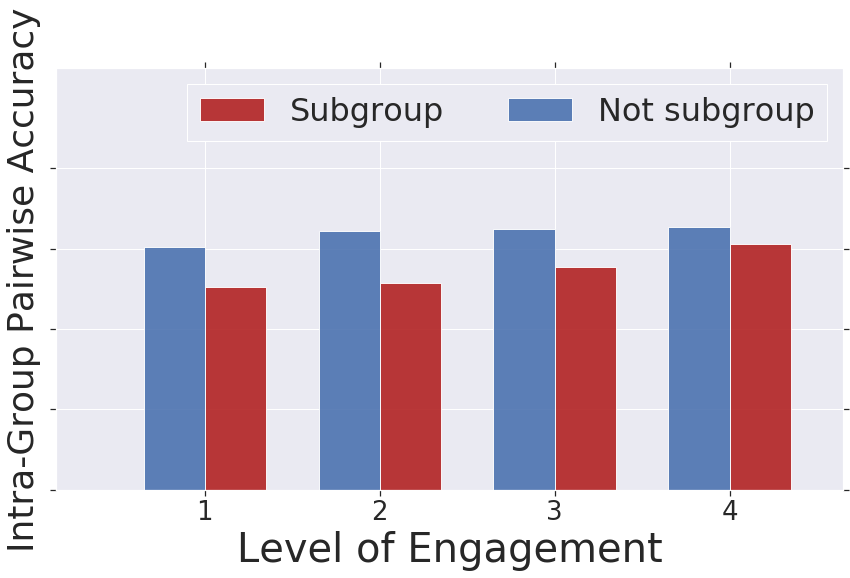}
        \caption{After \pairregshort}
    \end{subfigure}
    \caption{We observe slight differences in \emph{intra-group pairwise accuracy}.}
    \label{fig:fairness_intra}
\end{figure}

\begin{figure*}[th!]
    \centering
    \begin{subfigure}[b]{0.28\textwidth}
        \includegraphics[width=\textwidth]{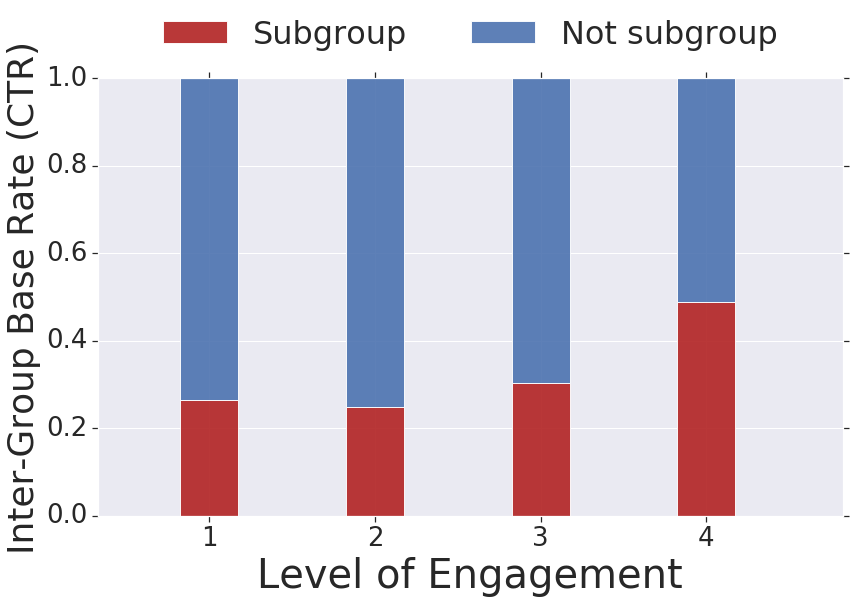}
        \caption{User Preferences}
        \label{fig:ctr}
    \end{subfigure}
    \hspace{2mm}
    \begin{subfigure}[b]{0.28\textwidth}
        \includegraphics[width=\textwidth]{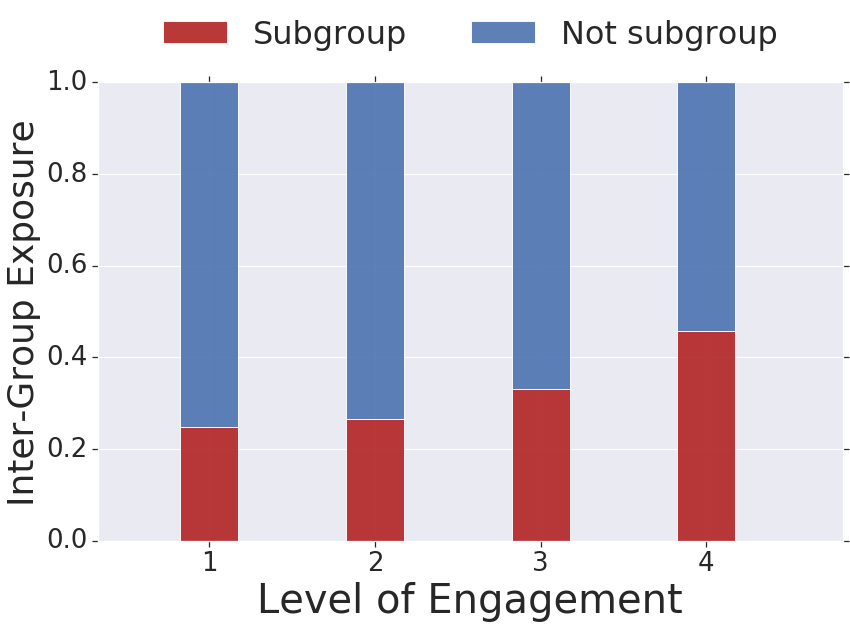}
        \caption{Original Model Exposure}
        \label{fig:prod_exposure}
    \end{subfigure}
    \hspace{2mm}
    \begin{subfigure}[b]{0.28\textwidth}
        \includegraphics[width=\textwidth]{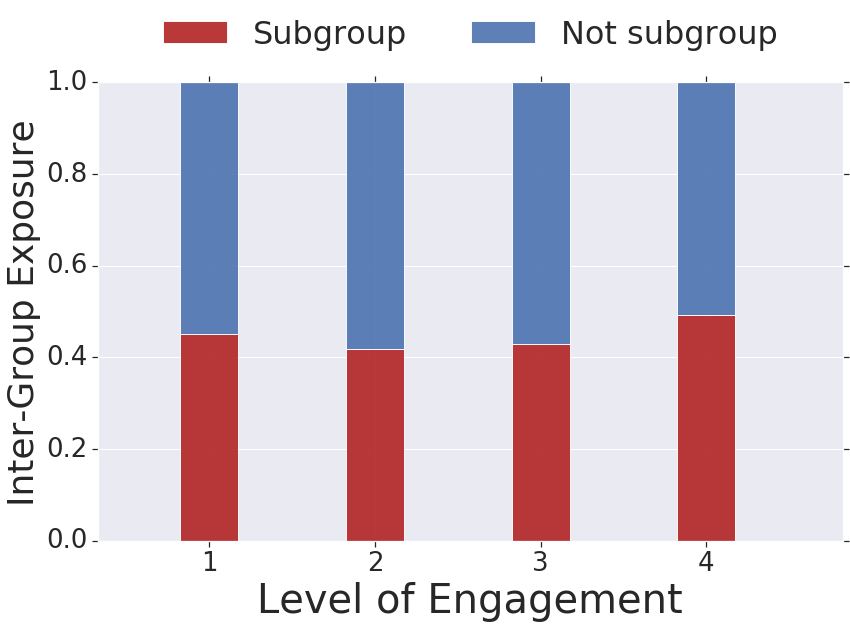}
        \caption{Exposure After Pairwise Reg.}
        \label{fig:test_exposure}
    \end{subfigure}
    \caption{We find that the original model's exposure closely matched the observed user preferences in the data.  In correcting for pairwise fairness we observe that we comparably show more items from the subgroup.}
    \label{fig:exposure}
\end{figure*}

\subsection{Fairness Improvements}
As described above, we apply the \pairreg from Section \ref{sec:modeling} over inter-group pairs of examples so as to optimize for inter-group pairwise fairness.

We see in Figure \ref{fig:fairness_inter} the effect of \pairreg on the inter-group pairwise fairness, the metric that it is most aligned with.  While the regularization decreases the pairwise accuracy of the non-subgroup items, it effectively closes the gap in the inter-group pairwise fairness metric, resulting in only a 2.6\% advantage for non-subgroup item in inter-group pairwise fairness, down from 35.6\%.  Further, while the decrease in pairwise accuracy for the non-subgroup items may appear discouraging, the pairwise accuracy for the non-subgroup in the test model is approximately on-par with the pairwise accuracy metrics we see in intra-group comparisons for non-subgroup items, suggesting the model is no longer taking advantage of the difference in items.

While not our immediate goal, we also examine how improving the inter-group fairness effects the overall pairwise fairness.  As we see in Figure \ref{fig:fairness_overall}, there is a visible improvement in the pairwise accuracy for subgroup items and the gap between the groups is largely closed.  Quantitatively we observe that the relative benefit to the non-subgroup items decreases to 2.5\%, down from 8.3\%.  Intra-group accuracy is not optimized by our \pairreg configuration, and as expected we see little change in intra-group accuracy (Figure \ref{fig:fairness_intra}, with a 16.7\% advantage for non-group items).

Interestingly, in most of our live experiments using models trained with \pairreg we found overall engagement metrics were neutral relative to the production system.  Given the subgroup is a tiny fraction of the overall system, it is reassuring to see that the above fairness benefits do not come at a cost to overall performance. 

Together, this shows that \pairreg is effective in improving the fairness properties of the ranker.

\subsection{How are improvements achieved?}
While the results are compelling, we do further analysis to understand how the regularization is able to close fairness gaps.   To do this, we examine the exposure of items from each group compared to the user preferences, similar in principle to a coarse pairwise calibration analysis.

To understand the user preferences, we measure the percentage of  inter-group pairs for which users prefer (click on) the subgroup item versus the non-subgroup item.  This presents a base rate click-through-rate (CTR) for each group, similar to the analysis in Section \ref{sub:theory_relations}.  As we see in Figure \ref{fig:ctr}, across nearly all levels of engagement, the subgroup items are less likely to be clicked when juxtaposed with a non-subgroup item; interestingly, high-engagement interactions show a nearly even balance of likelihood of a click across the groups.

To understand how the model performs compared to this base CTR, we measure \emph{exposure}: the probability of the model ranking one group's item above that of the other group, irrespective of the user preference\footnote{This is a slight modification of exposure as defined by \citet{DBLP:conf/kdd/SinghJ18}, using a probabilistic form over inter-group comparisons.}.  To be precise:
\begin{align*}
{\rm Exposure}_{\tilde{s},\tilde{z}} = P(\compare | s_j = \tilde{s}, s_{j'} = 1 - \tilde{s}, z_{\vec{q},j'} = \tilde{z})
\end{align*}
As we see in Figure \ref{fig:prod_exposure}, the production model  exposes each group at approximately the same rate as the group's base CTR.

As we can see in Figure \ref{fig:test_exposure}, the exposure of items from each group changes significantly when the model is trained with the \pairreg.  Even with lower levels of engagement, items from the subgroup are ranked higher at significantly higher rate than the base CTR.  This suggests that the regularizer has the effect of showing subgroup items at a higher rate than is natural so as to make sure users interested in subgroup items are recommended them.  This aligns with Lemma \ref{lemma:calibration} suggesting a general tension between calibration and pairwise fairness.
We believe further research on this relationship and more generally how to improve model accuracy can help alleviate this tension, but for the time-being find this to be a reasonable trade-off.

\section{Conclusion}
In this work we have provided a tractable way to get unbiased measurements of recommender system ranking fairness.  We are able to do this through pairwise experiments to observe user preferences.  Based on this experimental data, we can evaluate and decompose recommender system fairness to see if a model systematically mis-ranks or under-ranks items from a particular group.  We show that this measure aligns with ranking fairness definitions but is not covered by pointwise fairness measures.  We ultimately offer a novel \pairreg approach to improve recommender system fairness during training, and show that it significantly improves fairness metrics in a large-scale production system.

\vspace{2mm}
{\footnotesize 
\noindent \textbf{Acknowledgements:}
The authors would like to thank Ben Packer, Xuezhi Wang, and Andrew Cotter for their helpful comments during the preparation of this paper.
}

\cleardoublepage

\bibliography{alex}

\begin{thebibliography}{55}
\providecommand{\natexlab}[1]{#1}
\providecommand{\url}[1]{\texttt{#1}}
\expandafter\ifx\csname urlstyle\endcsname\relax
  \providecommand{\doi}[1]{doi: #1}\else
  \providecommand{\doi}{doi: \begingroup \urlstyle{rm}\Url}\fi

\bibitem[Adams and Zemel(2011)]{adams2011ranking}
R.~P. Adams and R.~S. Zemel.
\newblock Ranking via sinkhorn propagation.
\newblock \emph{arXiv preprint arXiv:1106.1925}, 2011.

\bibitem[Agarwal et~al.(2018{\natexlab{a}})Agarwal, Beygelzimer, Dud{\'\i}k,
  Langford, and Wallach]{agarwal2018reductions}
A.~Agarwal, A.~Beygelzimer, M.~Dud{\'\i}k, J.~Langford, and H.~Wallach.
\newblock A reductions approach to fair classification.
\newblock \emph{arXiv preprint arXiv:1803.02453}, 2018{\natexlab{a}}.

\bibitem[Agarwal et~al.(2018{\natexlab{b}})Agarwal, Zaitsev, Wang, Li, Najork,
  and Joachims]{agarwal2018estimating}
A.~Agarwal, I.~Zaitsev, X.~Wang, C.~Li, M.~Najork, and T.~Joachims.
\newblock Estimating position bias without intrusive interventions.
\newblock \emph{arXiv preprint arXiv:1812.05161}, 2018{\natexlab{b}}.

\bibitem[Bakshy et~al.(2015)Bakshy, Messing, and Adamic]{bakshy2015exposure}
E.~Bakshy, S.~Messing, and L.~A. Adamic.
\newblock Exposure to ideologically diverse news and opinion on facebook.
\newblock \emph{Science}, 348\penalty0 (6239):\penalty0 1130--1132, 2015.

\bibitem[Bechavod and Ligett(2017)]{bechavod2017penalizing}
Y.~Bechavod and K.~Ligett.
\newblock Penalizing unfairness in binary classification.
\newblock \emph{arXiv preprint arXiv:1707.00044}, 2017.

\bibitem[Bello et~al.(2018)Bello, Kulkarni, Jain, Boutilier, Chi, Eban, Luo,
  Mackey, and Meshi]{bello2018seq2slate}
I.~Bello, S.~Kulkarni, S.~Jain, C.~Boutilier, E.~Chi, E.~Eban, X.~Luo,
  A.~Mackey, and O.~Meshi.
\newblock Seq2slate: Re-ranking and slate optimization with rnns.
\newblock \emph{arXiv preprint arXiv:1810.02019}, 2018.

\bibitem[Beutel et~al.(2017{\natexlab{a}})Beutel, Chen, Zhao, and
  Chi]{beutel2017data}
A.~Beutel, J.~Chen, Z.~Zhao, and E.~H. Chi.
\newblock Data decisions and theoretical implications when adversarially
  learning fair representations.
\newblock \emph{arXiv preprint arXiv:1707.00075}, 2017{\natexlab{a}}.

\bibitem[Beutel et~al.(2017{\natexlab{b}})Beutel, Chi, Cheng, Pham, and
  Anderson]{beutel2017beyond}
A.~Beutel, E.~H. Chi, Z.~Cheng, H.~Pham, and J.~Anderson.
\newblock Beyond globally optimal: Focused learning for improved
  recommendations.
\newblock In \emph{WWW}, pages 203--212, 2017{\natexlab{b}}.

\bibitem[Beutel et~al.(2018)Beutel, Covington, Jain, Xu, Li, Gatto, and
  Chi]{beutel2018latent}
A.~Beutel, P.~Covington, S.~Jain, C.~Xu, J.~Li, V.~Gatto, and E.~H. Chi.
\newblock Latent cross: Making use of context in recurrent recommender systems.
\newblock In \emph{WSDM}, pages 46--54, 2018.

\bibitem[Beutel et~al.(2019)Beutel, Chen, Doshi, Qian, Woodruff, Luu,
  Kreitmann, Bischof, and Chi]{beutel2019putting}
A.~Beutel, J.~Chen, T.~Doshi, H.~Qian, A.~Woodruff, C.~Luu, P.~Kreitmann,
  J.~Bischof, and E.~H. Chi.
\newblock Putting fairness principles into practice: Challenges, metrics, and
  improvements.
\newblock \emph{arXiv preprint arXiv:1901.04562}, 2019.

\bibitem[Biega et~al.(2018)Biega, Gummadi, and Weikum]{biega2018equity}
A.~J. Biega, K.~P. Gummadi, and G.~Weikum.
\newblock Equity of attention: Amortizing individual fairness in rankings.
\newblock In \emph{The 41st International {ACM} {SIGIR} Conference on Research
  {\&} Development in Information Retrieval, {SIGIR} 2018, Ann Arbor, MI, USA,
  July 08-12, 2018}, pages 405--414, 2018.

\bibitem[Borkan et~al.(2019)Borkan, Dixon, Sorensen, Thain, and
  Vasserman]{manwhitneyfairness}
D.~Borkan, L.~Dixon, J.~Sorensen, N.~Thain, and L.~Vasserman.
\newblock Nuanced metrics for measuring unintended bias with real data for text
  classification.
\newblock 2019.

\bibitem[Calders and Verwer(2010)]{calders2010three}
T.~Calders and S.~Verwer.
\newblock Three naive bayes approaches for discrimination-free classification.
\newblock \emph{Data Mining and Knowledge Discovery}, 21\penalty0 (2):\penalty0
  277--292, 2010.

\bibitem[Cao et~al.(2007)Cao, Qin, Liu, Tsai, and Li]{cao2007learning}
Z.~Cao, T.~Qin, T.-Y. Liu, M.-F. Tsai, and H.~Li.
\newblock Learning to rank: from pairwise approach to listwise approach.
\newblock In \emph{Proceedings of the 24th international conference on Machine
  learning}, pages 129--136. ACM, 2007.

\bibitem[Chen et~al.(2018)Chen, Beutel, Covington, Jain, Belletti, and
  Chi]{chen2018top}
M.~Chen, A.~Beutel, P.~Covington, S.~Jain, F.~Belletti, and E.~Chi.
\newblock Top-k off-policy correction for a reinforce recommender system.
\newblock \emph{arXiv preprint arXiv:1812.02353}, 2018.

\bibitem[Covington et~al.(2016)Covington, Adams, and Sargin]{covington2016deep}
P.~Covington, J.~Adams, and E.~Sargin.
\newblock Deep neural networks for youtube recommendations.
\newblock In \emph{RecSys}, pages 191--198, 2016.

\bibitem[Crowson et~al.(2016)Crowson, Atkinson, and
  Therneau]{crowson2016assessing}
C.~S. Crowson, E.~J. Atkinson, and T.~M. Therneau.
\newblock Assessing calibration of prognostic risk scores.
\newblock \emph{Statistical methods in medical research}, 25\penalty0
  (4):\penalty0 1692--1706, 2016.

\bibitem[Dixon et~al.(2018)Dixon, Li, Sorensen, Thain, and
  Vasserman]{dixon2018measuring}
L.~Dixon, J.~Li, J.~Sorensen, N.~Thain, and L.~Vasserman.
\newblock Measuring and mitigating unintended bias in text classification.
\newblock 2018.

\bibitem[Dwork et~al.(2012)Dwork, Hardt, Pitassi, Reingold, and
  Zemel]{dwork2012fairness}
C.~Dwork, M.~Hardt, T.~Pitassi, O.~Reingold, and R.~Zemel.
\newblock Fairness through awareness.
\newblock In \emph{Proceedings of the 3rd innovations in theoretical computer
  science conference}, pages 214--226. ACM, 2012.

\bibitem[Edwards and Storkey(2015)]{edwards2015censoring}
H.~Edwards and A.~Storkey.
\newblock Censoring representations with an adversary.
\newblock \emph{arXiv preprint arXiv:1511.05897}, 2015.

\bibitem[Ekstrand et~al.(2018)Ekstrand, Tian, Kazi, Mehrpouyan, and
  Kluver]{ekstrand2018exploring}
M.~D. Ekstrand, M.~Tian, M.~R.~I. Kazi, H.~Mehrpouyan, and D.~Kluver.
\newblock Exploring author gender in book rating and recommendation.
\newblock In \emph{RecSys}, pages 242--250, 2018.

\bibitem[Goh et~al.(2016)Goh, Cotter, Gupta, and
  Friedlander]{DBLP:conf/nips/GohCGF16}
G.~Goh, A.~Cotter, M.~R. Gupta, and M.~P. Friedlander.
\newblock Satisfying real-world goals with dataset constraints.
\newblock In \emph{Advances in Neural Information Processing Systems}, pages
  2415--2423, 2016.

\bibitem[Hardt et~al.(2016)Hardt, Price, and Srebro]{hardt2016equality}
M.~Hardt, E.~Price, and N.~Srebro.
\newblock Equality of opportunity in supervised learning.
\newblock In \emph{Advances in neural information processing systems}, pages
  3315--3323, 2016.

\bibitem[He et~al.(2014)He, Pan, Jin, Xu, Liu, Xu, Shi, Atallah, Herbrich,
  Bowers, et~al.]{he2014practical}
X.~He, J.~Pan, O.~Jin, T.~Xu, B.~Liu, T.~Xu, Y.~Shi, A.~Atallah, R.~Herbrich,
  S.~Bowers, et~al.
\newblock Practical lessons from predicting clicks on ads at facebook.
\newblock In \emph{Proceedings of the Eighth International Workshop on Data
  Mining for Online Advertising}, pages 1--9. ACM, 2014.

\bibitem[He et~al.(2017)He, Liao, Zhang, Nie, Hu, and Chua]{he2017neural}
X.~He, L.~Liao, H.~Zhang, L.~Nie, X.~Hu, and T.-S. Chua.
\newblock Neural collaborative filtering.
\newblock In \emph{Proceedings of the 26th International Conference on World
  Wide Web}, pages 173--182, 2017.

\bibitem[Hidasi et~al.(2015)Hidasi, Karatzoglou, Baltrunas, and
  Tikk]{hidasi2015session}
B.~Hidasi, A.~Karatzoglou, L.~Baltrunas, and D.~Tikk.
\newblock Session-based recommendations with recurrent neural networks.
\newblock \emph{arXiv preprint arXiv:1511.06939}, 2015.

\bibitem[Jiang et~al.(2019)Jiang, Chiappa, Lattimore, Agyorgy, and
  Kohli]{feedbackloops}
R.~Jiang, S.~Chiappa, T.~Lattimore, A.~Agyorgy, and P.~Kohli.
\newblock Degenerate feedback loops in recommender systems.
\newblock 2019.

\bibitem[Joachims(2002)]{joachims2002optimizing}
T.~Joachims.
\newblock Optimizing search engines using clickthrough data.
\newblock In \emph{KDD}, pages 133--142, 2002.

\bibitem[Joachims et~al.(2017)Joachims, Swaminathan, and
  Schnabel]{joachims2017unbiased}
T.~Joachims, A.~Swaminathan, and T.~Schnabel.
\newblock Unbiased learning-to-rank with biased feedback.
\newblock In \emph{WSDM}, pages 781--789, 2017.

\bibitem[Kamishima et~al.(2011)Kamishima, Akaho, and
  Sakuma]{kamishima2011fairness}
T.~Kamishima, S.~Akaho, and J.~Sakuma.
\newblock Fairness-aware learning through regularization approach.
\newblock In \emph{2011 IEEE 11th International Conference on Data Mining
  Workshops (ICDMW)}, pages 643--650. IEEE, 2011.

\bibitem[Kleinberg and Raghavan(2018)]{kleinberg2018selection}
J.~Kleinberg and M.~Raghavan.
\newblock Selection problems in the presence of implicit bias.
\newblock \emph{arXiv preprint arXiv:1801.03533}, 2018.

\bibitem[Kleinberg et~al.(2016)Kleinberg, Mullainathan, and
  Raghavan]{kleinberg2016inherent}
J.~Kleinberg, S.~Mullainathan, and M.~Raghavan.
\newblock Inherent trade-offs in the fair determination of risk scores.
\newblock \emph{arXiv preprint arXiv:1609.05807}, 2016.

\bibitem[Koren(2009)]{koren2009collaborative}
Y.~Koren.
\newblock Collaborative filtering with temporal dynamics.
\newblock In \emph{KDD}, pages 447--456, 2009.

\bibitem[Koren et~al.(2009)Koren, Bell, and Volinsky]{koren2009matrix}
Y.~Koren, R.~Bell, and C.~Volinsky.
\newblock Matrix factorization techniques for recommender systems.
\newblock \emph{Computer}, \penalty0 (8):\penalty0 30--37, 2009.

\bibitem[Louizos et~al.(2015)Louizos, Swersky, Li, Welling, and
  Zemel]{louizos2015variational}
C.~Louizos, K.~Swersky, Y.~Li, M.~Welling, and R.~Zemel.
\newblock The variational fair autoencoder.
\newblock \emph{arXiv preprint arXiv:1511.00830}, 2015.

\bibitem[Ma et~al.(2018)Ma, Zhao, Yi, Chen, Hong, and Chi]{ma2018modeling}
J.~Ma, Z.~Zhao, X.~Yi, J.~Chen, L.~Hong, and E.~H. Chi.
\newblock Modeling task relationships in multi-task learning with multi-gate
  mixture-of-experts.
\newblock In \emph{KDD}, pages 1930--1939, 2018.

\bibitem[Madras et~al.(2018)Madras, Creager, Pitassi, and
  Zemel]{madras2018learning}
D.~Madras, E.~Creager, T.~Pitassi, and R.~Zemel.
\newblock Learning adversarially fair and transferable representations.
\newblock \emph{arXiv preprint arXiv:1802.06309}, 2018.

\bibitem[McMahan et~al.(2013)McMahan, Holt, Sculley, Young, Ebner, Grady, Nie,
  Phillips, Davydov, Golovin, et~al.]{mcmahan2013ad}
H.~B. McMahan, G.~Holt, D.~Sculley, M.~Young, D.~Ebner, J.~Grady, L.~Nie,
  T.~Phillips, E.~Davydov, D.~Golovin, et~al.
\newblock Ad click prediction: a view from the trenches.
\newblock In \emph{KDD}, pages 1222--1230, 2013.

\bibitem[Mehrotra et~al.(2018)Mehrotra, McInerney, Bouchard, Lalmas, and
  Diaz]{mehrotra2018towards}
R.~Mehrotra, J.~McInerney, H.~Bouchard, M.~Lalmas, and F.~Diaz.
\newblock Towards a fair marketplace: Counterfactual evaluation of the
  trade-off between relevance, fairness \& satisfaction in recommendation
  systems.
\newblock In \emph{CIKM}, pages 2243--2251, 2018.

\bibitem[Pleiss et~al.(2017)Pleiss, Raghavan, Wu, Kleinberg, and
  Weinberger]{pleiss2017fairness}
G.~Pleiss, M.~Raghavan, F.~Wu, J.~Kleinberg, and K.~Q. Weinberger.
\newblock On fairness and calibration.
\newblock In \emph{Advances in Neural Information Processing Systems}, pages
  5680--5689, 2017.

\bibitem[Ritov et~al.(2017)Ritov, Sun, and Zhao]{ritov2017conditional}
Y.~Ritov, Y.~Sun, and R.~Zhao.
\newblock On conditional parity as a notion of non-discrimination in machine
  learning.
\newblock \emph{arXiv preprint arXiv:1706.08519}, 2017.

\bibitem[Schnabel et~al.(2016{\natexlab{a}})Schnabel, Swaminathan, Frazier, and
  Joachims]{schnabel2016unbiased}
T.~Schnabel, A.~Swaminathan, P.~I. Frazier, and T.~Joachims.
\newblock Unbiased comparative evaluation of ranking functions.
\newblock In \emph{Proceedings of the 2016 ACM International Conference on the
  Theory of Information Retrieval}, pages 109--118. ACM, 2016{\natexlab{a}}.

\bibitem[Schnabel et~al.(2016{\natexlab{b}})Schnabel, Swaminathan, Singh,
  Chandak, and Joachims]{schnabel2016recommendations}
T.~Schnabel, A.~Swaminathan, A.~Singh, N.~Chandak, and T.~Joachims.
\newblock Recommendations as treatments: Debiasing learning and evaluation.
\newblock \emph{arXiv preprint arXiv:1602.05352}, 2016{\natexlab{b}}.

\bibitem[Singh and Joachims(2018)]{DBLP:conf/kdd/SinghJ18}
A.~Singh and T.~Joachims.
\newblock Fairness of exposure in rankings.
\newblock In \emph{KDD}, pages 2219--2228, 2018.

\bibitem[{Singh} and {Joachims}(2019)]{2019arXiv190204056S}
A.~{Singh} and T.~{Joachims}.
\newblock {Policy Learning for Fairness in Ranking}.
\newblock \emph{arXiv e-prints}, art. arXiv:1902.04056, Feb 2019.

\bibitem[Stoyanovich et~al.(2018)Stoyanovich, Yang, and
  Jagadish]{stoyanovich2018online}
J.~Stoyanovich, K.~Yang, and H.~Jagadish.
\newblock Online set selection with fairness and diversity constraints.
\newblock In \emph{Proceedings of the EDBT Conference}, 2018.

\bibitem[Wang et~al.(2011)Wang, Lin, and Metzler]{wang2011cascade}
L.~Wang, J.~Lin, and D.~Metzler.
\newblock A cascade ranking model for efficient ranked retrieval.
\newblock In \emph{SIGIR}, pages 105--114, 2011.

\bibitem[Wu et~al.(2017)Wu, Ahmed, Beutel, Smola, and Jing]{wu2017recurrent}
C.-Y. Wu, A.~Ahmed, A.~Beutel, A.~J. Smola, and H.~Jing.
\newblock Recurrent recommender networks.
\newblock In \emph{WSDM}, pages 495--503, 2017.

\bibitem[Yao and Huang(2017)]{yao2017beyond}
S.~Yao and B.~Huang.
\newblock Beyond parity: Fairness objectives for collaborative filtering.
\newblock In \emph{Advances in Neural Information Processing Systems}, pages
  2921--2930, 2017.

\bibitem[Yi et~al.(2014)Yi, Hong, Zhong, Liu, and Rajan]{yi2014beyond}
X.~Yi, L.~Hong, E.~Zhong, N.~N. Liu, and S.~Rajan.
\newblock Beyond clicks: dwell time for personalization.
\newblock In \emph{Proceedings of the 8th ACM Conference on Recommender
  systems}, pages 113--120. ACM, 2014.

\bibitem[Zafar et~al.(2015)Zafar, Valera, Rodriguez, and
  Gummadi]{zafar2015fairness}
M.~B. Zafar, I.~Valera, M.~G. Rodriguez, and K.~P. Gummadi.
\newblock Fairness constraints: Mechanisms for fair classification.
\newblock \emph{arXiv preprint arXiv:1507.05259}, 2015.

\bibitem[Zehlike et~al.(2017)Zehlike, Bonchi, Castillo, Hajian, Megahed, and
  Baeza-Yates]{zehlike2017fa}
M.~Zehlike, F.~Bonchi, C.~Castillo, S.~Hajian, M.~Megahed, and R.~Baeza-Yates.
\newblock Fa* ir: A fair top-k ranking algorithm.
\newblock In \emph{CIKM}, pages 1569--1578. ACM, 2017.

\bibitem[Zemel et~al.(2013)Zemel, Wu, Swersky, Pitassi, and
  Dwork]{zemel2013learning}
R.~Zemel, Y.~Wu, K.~Swersky, T.~Pitassi, and C.~Dwork.
\newblock Learning fair representations.
\newblock In \emph{ICML}, pages 325--333, 2013.

\bibitem[Zhang et~al.(2018)Zhang, Lemoine, and
  Mitchell]{DBLP:journals/corr/abs-1801-07593}
B.~H. Zhang, B.~Lemoine, and M.~Mitchell.
\newblock Mitigating unwanted biases with adversarial learning.
\newblock \emph{CoRR}, abs/1801.07593, 2018.

\bibitem[Zhu et~al.(2018)Zhu, Hu, and Caverlee]{zhu2018fairness}
Z.~Zhu, X.~Hu, and J.~Caverlee.
\newblock Fairness-aware tensor-based recommendation.
\newblock In \emph{CIKM}, pages 1153--1162, 2018.

\end{thebibliography}

\end{document}